\newtheorem{theorem}{Theorem}
\newtheorem{remark}{Remark}
\newtheorem{definition}{Definition}
\newtheorem{proposition}{Proposition}
\let\emptyset\varnothing
\begin{document}

% \title{ PID Control of Biochemical Reaction Networks
% %\thanks{Identify applicable funding agency here. If none, delete this.}
% }

% \author{\IEEEauthorblockN{Max Whitby}
% \IEEEauthorblockA{\textit{Department of Computer Science} \\
% \textit{University of Oxford}\\
% max.whitby@cs.ox.ac.uk}
% \and
% \IEEEauthorblockN{Luca Cardelli}
% \IEEEauthorblockA{\textit{Department of Computer Science} \\
% \textit{University of Oxford}\\
% luca.cardelli@cs.ox.ac.uk}
% \and
% \IEEEauthorblockN{Marta Kwiatkowska}
% \IEEEauthorblockA{\textit{Department of Computer Science} \\
% \textit{University of Oxford}\\
% marta.kwiatkowska@cs.ox.ac.uk}
% \and
% \IEEEauthorblockN{Luca Laurenti}
% \IEEEauthorblockA{\textit{Department of Computer Science} \\
% \textit{University of Oxford}\\
% luca.laurenti@cs.ox.ac.uk}
% \and
% \IEEEauthorblockN{Mirco Tribastone}
% \IEEEauthorblockA{
% \textit{IMT School for Advanced Studies}\\
% Lucca, Italy\\
% mirco.tribastone@imtlucca.it}
% \and
% \IEEEauthorblockN{Max Tschaikowski}
% \IEEEauthorblockA{\textit{Department of Computer Engineering} \\
% \textit{Vienna University of Technology}\\
% max.tschaikowski@tuwien.ac.at}
% }

\title{\LARGE \bf
PID Control of Biochemical Reaction Networks
}

\author{Max Whitby$^{1}$, Luca Cardelli$^{1}$, Marta Kwiatkowska$^{1}$, Luca Laurenti$^{1}$, Mirco Tribastone$^{2}$, Max Tschaikowski$^{3}$
% <-this % stops a space
%\thanks{*This work was not supported by any organization}% <-this % stops a space
\thanks{$^{1}$ Department of Computer Science, University of Oxford, UK}
\thanks{$^{2}$ IMT School for Advanced Studies, Lucca, Italy}%
\thanks{$^{3}$ Department of Computer Engineering, TU Wien, Austria}%
}

\maketitle

\begin{abstract}
%Contributions: differentiator, controlability, Dual rail to single rail, gene expression
%Control of biochemical processes has application in metabolic engineering and synthetic biology. Current approaches do not address the fact that systems are naturally occuring systems are single rail. \LL{NOt totally  true. there are many papers that do that, just they are  ad-hoc}Current controllers are also PI 
%\LL{I tried to revise a bit the abstract and intro to be clearer on what are the contributions of the paper. We still need to polish them.}\IMT{Looks fine to me}
Principles of feedback control have been shown to naturally arise in biological systems and %have been 
successfully applied %with success 
to build synthetic circuits.
In this work we consider Biochemical Reaction Networks (CRNs) as a paradigm for modelling biochemical systems and provide the first implementation of a derivative component in CRNs. That is,
given an input signal represented by the concentration level of some species, we build a CRN that produces as output the concentration of two species whose difference is the derivative of the input signal. %\LC{Strictly we do not produce "a species" but the difference of two species: how can we finesse this?}\LL{I tried to modify this without going too much in the detail, please check.}\MK{But is this just dual rail? If so it is better to say it.} \LC{Seems fine now, we probably do not want to use "dual-rail" (without explanation) in the abstract. Dual-rail usually means encoding a boolean signal on two boolean lines, not a difference like here. I have been using "differential representation" but that's a bit confusing when we also talk about derivatives. Maybe just "difference representation"?} 
By relying on this component, we present a CRN implementation of a feedback control loop with Proportional-Integral-Derivative (PID) controller and apply the resulting control architecture to regulate the protein expression in a \emph{microRNA} regulated gene expression model.

%Biochemical systems can often be prescribed by Chemical Recomponent Networks and they are used as schematics for chemical engineering. Previous results for feedback control with Chemical Recomponent Networks rely on Proportional-Integral(PI) feedback. This can lead to instability and oscillations around a fixed point. We present the first Chemical Recomponent Network implementation of a Proportional-Integral-Derivative (PID) controller. Our derivative component is in itself also original. We show how the proposed circuit can be used to control naturally occurring gene expression systems and compare it to similar results of a PI controller.
\end{abstract}

%\section{Jottings}
%\begin{itemize}
    % \item \IMT{Is it not more appropriate for the title to be ``Biochemical implementation of PID control'', or something like that? Because, strictly speaking, we could use it for any control system} \LC{The title reflects exactly the original intent, the proposed new title reflects the opposite intent... The topic of controlling biochemical reaction networks is hot in synthetic biology. The topic of chemically controlling arbitrary things, I don't know. We use a chemical controller because we want to control chemical things, that's the motivation.}
    %\item \TUW{We should stress somewhere that we do not have to forward the outputs of our I and D blocks to P blocks because our I and D blocks have ``integrated'' multipliers. (We mention the multipliers in the main figure but this is probably not enough).}

%    \item \TUW{In general, the output and reference signals are not one dimensional. Hence, we may mention in the caption of the main figure that all blocks have to be interpreted as vectors. We say, however, it also in the main text above Definition 1.}
%    \item \TUW{We say several times that we provide a CRN encoding of a \emph{negative} PID feedback controller. I think the main message is that we provide a PID feedback controller.}\MW{Agreed}
%    \item \TUW{To be consistent, we are now using derivative block instead of differential block; also, we use block instead of box.}
%\end{itemize}
%%%%%%%%%%%%%%%%%%%%%%%%%%%%%%%%%%%%%%%%%%%%%%%%%%%%%%%%%%%%%%%%%%%%%%%%%%%%%%%%
\section{INTRODUCTION}

Biochemical Reaction Networks (CRNs) are a widely used formalism to describe biochemical systems  \cite{erdi1989mathematical}. More recently, they have also been employed as a formal programming language for synthetic circuits made of DNA~\cite{Soloveichik5393, chen2013programmable}.  Due to the numerous potential applications, ranging from smart therapeutics to biosensors, the construction of CRNs that exhibit prescribed dynamics is a major goal of synthetic biology.
However, achieving a desired behaviour by designing a CRN is difficult due to the  complexity of such systems and limited knowledge of their dynamics~\cite{cardelli2017syntax,chiu2015synthesizing}.

Negative feedback and Proportional-Integral-Derivative (PID) control are widely used in engineering to control the dynamics of a system due to their ability to achieve accurate set-point tracking and robustness to disturbances, even with only partial knowledge of the system. Because of these properties, such mechanisms have also been applied with success in the construction of synthetic bio-molecular systems~\cite{del2016control,kelly2018synthetic}. Moreover, molecular implementation of control systems has been shown to naturally occur in living organisms \cite{zhang2012design,dunlop2010model,yi2000robust}.
For example, integral control occurs in \emph{E.coli} chemotaxis \cite{barkai1997robustness,briat2016antithetic}, while \emph{CheY} proteins regulate the bacteria's tumbling frequency by implementing a derivative control \cite{alon1998response}. %\LL{Be careful to not replace "control" with "component" in the above sentence.}
%\LL{Max, check if this accurate with what they say in the references. I did not read the paper about the derivative control. So, let's be accurate.}
%step-changes in chemoattractant concentration, maintaining the system’s sensitivity to new concentration
%changes.  Indeed this mechanism also yields a naturally occuring biochemical implementation of a derivative action.
%\LL{Next sentence should be improved.}
As a consequence, in view of the potential applications, CRN designs that implement control mechanisms are %receiving increasing interest
sought for~\cite{briat2016antithetic,del2016control}.
CRNs implementing proportional and integral control have been proposed~\cite{oishi2011biomolecular,briat2016antithetic}.
%In \cite{briat2016antithetic} the \emph{antithetic integral} feedback molecular motif has been presented, which is a CRN that has been shown to guarantee network ergodicity and zero steady state error, while in \cite{oishi2011biomolecular} a CRN implementation of proportional and integral control has been given. However
However, a biochemical implementation of a full PID control is still missing due to the lack of a CRN implementing the derivative component. %\IMT{the derivative operator?}.

In this work we first present a CRN implementation of a derivative component. %\IMT{always use the same name: module/operator/function}\LC{/action/component}. 
That is, we provide a CRN such that, given an input signal represented by the concentration level of some species, the output is the concentration of two species %\LC{same comment as in Abstract} 
whose difference gives the derivative of the input signal. %Being molecular concentrations strictly positive by definition, we consider a \emph{dual rail encoding} for the output of our derivative module, where the time evolution of the output is given by the difference between the time evolution of two species \IMT{we need to cite Oishi and Klavins for this enconding, or earlier papers? Also, the point about dual-rail encoding does not need to be highlighted in the introduction in my opinion.}.
We use this as a building block for a PID controller, and show how negative feedback with PID controller can be implemented in CRNs.
We show the effectiveness of this architecture on a microRNA regulated gene expression example \cite{schmiedel2015microrna,laurenti2018molecular}, where we control the time evolution of a protein by acting on the expression of mRNA and microRNA.

In summary, we make the following contributions:
\begin{itemize}
\item We present a CRN that computes the derivative of an input signal and prove its asymptotic correctness.
\item We extend the correctness results of CRN encodings~\cite{oishi2011biomolecular} of proportional and integral signals to the case of nonlinear dynamics. 
%\LL{Added This} We extend the proof of correctness for the CRNs implementing proportional and integral functions presented in~\cite{oishi2011biomolecular} to non-linear dynamics.
\item We provide for an arbitrary CRN plant a CRN encoding of the PID feedback controller.
\item We show the effectiveness of our control architecture on a microRNA regulated gene expression model.
\end{itemize}

%\LL{This can be reduced and  merged somewhere in the intro, if needed.} \IMT{I think that it may even be removed for CDC.}\LL{I agree}
%Differential control acts as a predictor of future error, reducing the accumulated integral error. Dampening the error can be important when the dynamics of the plant is such that a simpler PI controller would lead to overshooting and oscillation around the setpoint. That is usually due to second order dynamics, such as when controlling the position of a mass through a force, or in biology when controlling the concentration of a protein by gene activation through mRNA production. In both these cases the controlled variable is two integrations away from the controlling signal. %\MWrev{In this work we demonstrate a full PID controller with a new derivative action, motivated by the need for error reduction in gene expression circuits.} %\LL{They are also much more sensible to noise. So, we may need to add low pass filters \cite{laurenti2018molecular} and do some stochastic simulations.}

\section{Biochemical Reaction Networks}

In this section we provide some background about the deterministic mass-action semantics of a CRN based on the reaction-rate equations. Then we review the notion of dual rail encoding for a species. Finally we fix a graphical representation of CRNs that will be used throughout the paper. 

\subsection{Deterministic Mass-action semantics}
A CRN $\mathcal{C}=(\cal S,\mathcal{R})$ %\IMT{why $\Lambda$ and not $\cal S$, which seems more natural?} 
is a pair of finite sets, where $\cal S$ is an ordered set of species, $|\cal S|$ denotes its size, and $\mathcal{R}$ is an ordered set of reactions. Species in $\cal S$ interact according to the reactions in $\mathcal{R}$. A reaction $\tau \in \mathcal{R}$ is a triple $\tau=(r_{\tau},p_{\tau},k_{\tau})$, where $r_{\tau} \in  \mathbb{N}^{|\cal S|}$ is the \emph{reactant complex}, 
$p_{\tau} \in  \mathbb{N}^{|\cal S|}$ is the \emph{product complex} and $k_{\tau} \in \mathbb{R}_{>0} $ is the coefficient associated with the rate of the reaction. Complexes $r_{\tau}$ and $p_{\tau}$ represent the stoichiometry of reactants and products.
We denote the $i$-th component of complex $r_\tau$ by $r_{\tau,i}$;  the zero complex is denoted by $\emptyset$. Given a CRN with species set $\cal S$ $ = \{ A, B, C \}$, a reaction $(  [1,1,0],[0,0,2],k_1 )$ will be denoted by $A + B \, \rightarrow^{k_1}  \,2C$. 
The \emph{state change} associated to $\tau$ is defined by $\upsilon_{\tau}=p_{\tau} - r_{\tau}$.  For example, the state change of the reaction above is $[-1,-1,2]$. 

We consider the deterministic interpretation of a CRN based on the well-known reaction-rate equations with mass-action kinetics. Given a CRN $\mathcal{C}=(\cal S,\mathcal{R})$ and an initial condition $x_0 \in \mathbb{R}^{|\cal S|}_{\geq 0}$ representing the initial concentration of each species, the time course of the concentrations can be described as the solution of an initial value problem with the following system of ODEs
\begin{equation}
 \partial_t x(t)  =\sum_{(r_{\tau},p_{\tau},k_{\tau}) \in \mathcal{R}} \upsilon_{r} k_{\tau}\prod_{i=1}^{|\cal S|}{x_i(t)}^{r_{\tau,i}},
\label{eq:ODEDetermnisticSemanticsCRN}
\end{equation}
and initial condition $x(0) = x_0$. For a species $A \in \cal S$ we denote by $x_A(t)$ the concentration of $A$ at time $t$. 

In this paper we synthesise PID controllers with mass-action kinetics for CRNs. %\TUW{Modified} 
We will also assume that the plant is represented by a mass-action CRN, although our results carry over to plants given in terms of smooth control systems.

%consisting of CRNs with arbitrary, but smooth, kinetics.

%\LL{What do we mean with control systems in here? The plant? If so, it is not super clear, as the control system is generally given by the mechanisms used to control the plant and not by the plant itself. But, if you think it is clear, I am fine with that.}

%$\Phi : \mathbb{R}_{\geq 0} \rightarrow \mathbb{R}^{|\Lambda|} $ describes the concentration of the species over time, therefore $\Phi(t) \in \mathbb{R}^{|\Lambda|}_{\geq 0}$ is the vector of the species concentrations at time $t.$ 
% \LL{Need to assume there exists a steady state solution.} 
%\LL{Here add definition of Dual Rail Encoding}

% \IMT{The paragraph mentions dual rail encoding, but the first sentences discuss I/O blocks, which seems a separate issue. Also, I/O blocks do not seem to be formalized, I wonder if we need to do this}\MW{We had this discussion but felt that in order to formalise it would take a lot of space. I've split it into two areas. Feel free to formalise slightly more. Changed species names also.}

%\subsection{CRNS as I/O blocks}
%We envisage CRNs in terms of I/O blocks where species within the block can have the label input/output. Internal species specific to that block have no chemical interactions with species outside of the block unless labelled input/output. I/O blocks can be composed where the output species of one block can be the input species of another. \MW{Really good restructure. Ive now added new image (caption not updated)}

\subsection{Dual Rail Encoding}
%\IMT{Rewrote this} 
The plant is a CRN, hence its output is given by non-negative solutions. However, the PID controller will involve quantities that are negative such as the error, i.e., difference between the set-point and the output, as well as its derivative. In order to handle this we will use the so-called dual rail encoding~\cite{oishi2011biomolecular}, by which a  a signal is decomposed into a ``positive'' and ``negative'' species component whilst preserving a law of mass action kinetics such that each individual species concentrations cannot be negative. Specifically, for a signal $A$ we denote the two distinct component species by $A^+$ and $A^-$, representing the positive and negative signals, respectively. %\TUW{Modified} 
Owning to the fact that the plant is described by a mass-action CRN, we wish to point out that its non-negative output can (and is) captured by single rail encoding. 

%\IMT{I do not fully understand the next sentence} All components of a PID controller are designed to handle such dual real signals, except where they interface with the (single rail) output of the CRN plant, which is non-negative. 
%\IMT{If there is space, can we add a minimal example of dual-rail encoding at work?} %\TUW{Would remove next sentence, normalisation not clear for CDC community here.} As part of the normalisation to single rail, we often employ reactions of the form $A^+ + A^- \rightarrow \emptyset$ that do not affect the represented difference.

\subsection{Graphical Representation of CRNs}

A CRN can be represented as a labelled directed bipartite graph according to the usual Petri net representation with species and reaction nodes~\cite{cardelli2014morphisms}. A reaction node is labelled with a rate coefficient. There is an edge from a species node to a reaction node if the species is in the reactant complex, with label equal to its multiplicity; similarly, an edge from a reaction node to a species node indicates the presence of a species in the product complex. For example, we have the following representation for the reaction $2A + B \xrightarrow{k} B + 3A$:
\begin{minipage}{\linewidth}
       \centering
       \includegraphics[width=7cm]{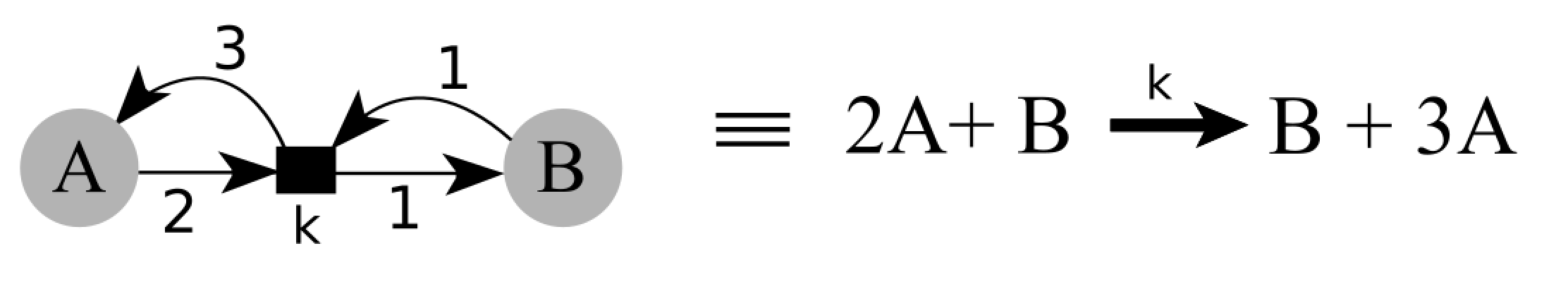}
       %\captionof{figure}{Rules for equivalence of our diagrammatic notation with petri nets}
       %\label{petriequivalence}
\end{minipage}

\begin{figure}[t]
       \centering
       \includegraphics[width=8.5cm]{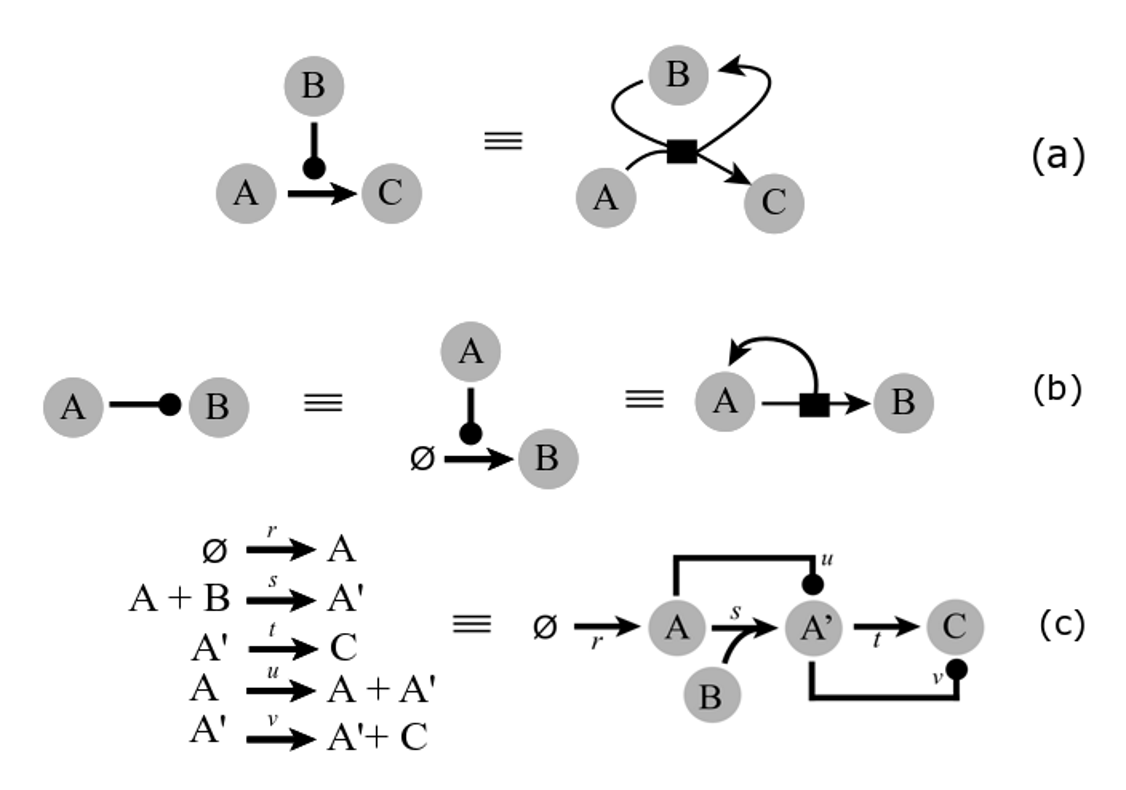}
       \caption{Short-hand CRN graphical notation. (a) A catalytic bi-molecular reaction  $A + B \rightarrow B + C$ as an equivalent Petri net; (b) a catalytic uni-molecular reaction $A \rightarrow A + B$; (c) a sample CRN depicted using the short-hand notation.
       %\LC{Part (c) needs fixing but I could not find the .svg for this figure. The fonts are wrong on the left, and the right part is squashed horizontally}
       }\label{petriequivalence}
\end{figure}
Throughout the rest of the paper, for ease of presentation we do not draw labels on edges if the related complex multiplicity is~1. We also remove the black box representing reaction nodes to reduce clutter. 
Finally, we introduce a short-hand notation for recurring reaction patterns as shown in Figure~\ref{petriequivalence}, where each arc is either a pointed arrow ($\uparrow$) or a  rounded arrow ($\upblackspoon$), with the source represented by the flat edge and the target represented by the arrow head. A pointed arrow represents a non-catalytic reaction and a rounded arrow represents a catalytic reaction.    

\section{CRN implementation of the PID controller}\label{CRNImplementation}

%\subsection{Composition}
%\LL{Input and output are species of the CRN, not of the circuit. What you compose is CRNs.}
%We represent a circuit with input species i.e. $A^{+}$, outputs i.e. $B^{-}$ and sometimes some internal species i.e. $C^{+}$. If two circuits are composed in series then the output species of the first circuit becomes the input species to the second circuit. Two species composed in parallel assumes that species are separate nad their is no interaction between modules. As we are using a dual rail controller a positive species is represented by a $+$ symbol and similarly a negative by $-$. 

%\LL{Here a general and concise description of PID controller. put equations, as this is a theoretical hournal. }

%\subsection{Composition of Controller}
%Our control loop consists of a reference signal $S$, a controller $C$ and a plant $P$. We use reference signals given in 

\begin{figure*}[!ht]
\centering
\begin{minipage}[l]{2.0\columnwidth}
\centering
\includegraphics[width=16cm]{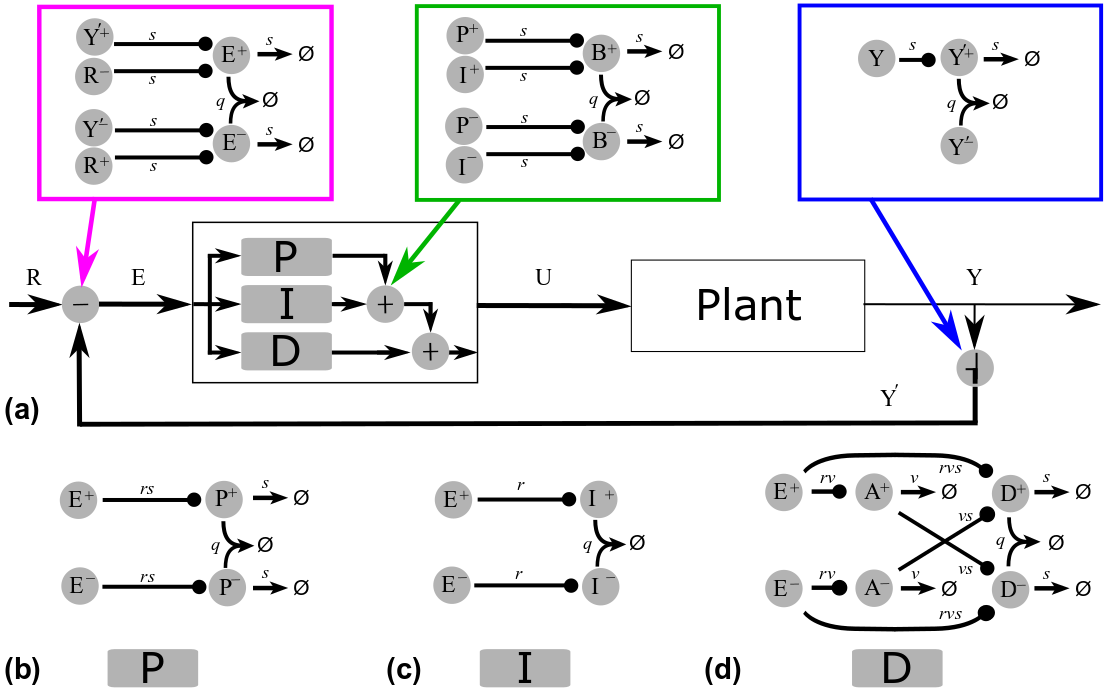}
\caption{
We present our feedback loop (a) which takes a smooth reference signal $R$ (in dual rail in its most general case) and, along with the feedback $Y$, produces an error $E$. Signal $E$ is obtained by the subtraction block (in magenta). Thick arrows imply dual rail whereas the thin arrow $Y$ implies single rail. $E$ is fed to the controller, the chemical composition of which is described in (b),(c),(d). The proportional and integral blocks (b),(c) are taken from \cite{oishi2011biomolecular}. The proportional block adjusts the input $x_A$ as $r x_A$ for some multiplier $r \geq 0$. The integral block takes an input $x_A$ and produces $r \int^{t}_0 x_A(\tau) d\tau$ for some multiplier $r \geq 0$. 
%We propose a novel derivative component (d). 
Instead, the novel derivative block (d) takes an input $x_A$ and produces the output $r \partial_t x_A$, where $r \geq 0$ is a multiplier. The foregoing blocks are summed by the addition block (in green), yielding a control signal $U$ which steers the plant by the CRN encoding presented in Section~\ref{CRNImplementation}. As a result, the plant produces a signal $Y$ which is converted to a dual rail signal $Y'$ by the dual rail converter block (in blue). The presence of a multiplier in each block allows one to adjust the weights of each block. In particular, parameter values $r,s,v,q$ are block dependent in general.}
\label{PIDfull}
\end{minipage}
\end{figure*}

 We introduce the CRN implementation of the proportional, integral, and derivative components of a PID controller. We describe them as blocks where the input species are $E^\pm$ (which will indicate the dual-rail error signal between the species representing the set-point and the plant output). The output of the PID controller is denoted by $U^\pm$. The proportional and integral components have been already introduced for linear control systems~\cite{oishi2011biomolecular}. Here we prove their correctness in the presence of non-linearity. In addition we detail the CRN implementation of the derivative block, which is a novel contribution to the best of our knowledge. 
 %We report all three circuits and give short proofs of their correctness. We further show how the proportional, integral, and derivative blocks can be employed to control naturally occurring biological signals. This is then used to provide a CRN realisation of a feedback control loop with a PID controller.

As shown in Figure \ref{PIDfull}, as in a classic feedback loop, the signals synthesised by the PID controller act on the CRN plant, whose output is measured, and sent back as input of the PID controller after comparison with the reference signal. The output of the plant is always given by a species $Y$. 
The objective of the control is to have $Y$ follow the reference signal. 

To this end, we let $(\mathcal{S}_\Sigma,\mathcal{R}_\Sigma)$ denote the mass-action CRN representing the plant and construct a CRN encoding of a PID feedback law as indicated in Figure~\ref{PIDfull}. %\IMT{To Max T: This $n$ comes from nowhere and I don't recall your explanation of it. I think that we need more explanation here.} \TUW{Modified.} 
Given that output and control signals are vectors in general, the blocks of Figure~\ref{PIDfull} are multidimensional components in general. In particular, assuming that $n \geq 1$ denotes the dimension of the output and control vector, the CRN encoding of the PID feedback law is given by interconnected
\begin{itemize}
    \item subtraction blocks $(\mathcal{S}^S_i,\mathcal{R}^S_i)_{1 \leq i \leq n}$
    \item addition blocks $(\mathcal{S}^A_i,\mathcal{R}^A_i)_{1 \leq i \leq n}$ 
    \item proportional blocks $(\mathcal{S}^P_i,\mathcal{R}^P_i)_{1 \leq i \leq n}$
    \item integral blocks $(\mathcal{S}^I_i,\mathcal{R}^I_i)_{1 \leq i \leq n}$
    \item derivative blocks $(\mathcal{S}^D_i,\mathcal{R}^D_i)_{1 \leq i \leq n}$
    \item dual rail converter blocks $(\mathcal{S}^C_i,\mathcal{R}^C_i)_{1 \leq i \leq n}$.\footnote{Please note that imposing the presence of a $P$, $D$ and $I$ block for every coordinate $1 \leq i \leq n$ is without loss of generality because a block can be removed by setting its multiplier $r$ to zero.}
\end{itemize}
With this, the overall CRN is given by
\begin{align}\label{eq:fullsys}
(\mathcal{S}_\Sigma,\mathcal{R}_\Sigma) \cup (\mathcal{S}_F,\mathcal{R}_F) ,
\end{align}
where the feedback law CRN is defined by
\begin{align*}
(\mathcal{S}_F,\mathcal{R}_F) & = \bigcup_{i=1}^n \bigcup_{X \in \mathcal{X}} (\mathcal{S}^X_i,\mathcal{R}^X_i) ,  & 
\mathcal{X} & = \{ S,A,P,I,D,C \} .
\end{align*}

In what follows next, we address the correctness of each block type.

\subsection{Proportional, Addition, Subtraction and Dual Rail Converter Blocks}

We begin by presenting the proportional block which computes an output signal that is proportional to the input signal. 

% \TUW{Read next comment} \LC{I do not understand, what is the "i-th" proportional block? Aren't we defining "a" generic proportional block in this section, exactly like Definition 1 already says? If you need some notation, put it inside the theorem's proof.}

% \textbf{\emph{Notation.}} For the benefit of presentation, we shall suppress the index $1 \leq i \leq n$ of the current block in question. For instance, in the case we consider the $i$-th proportional block, we write $E^+$, $x_{E^+}$ and $s$ instead of $E^+_i$, $x_{E^+_i}$ and $s_{P_i}$, respectively.

%This component is widely used in control systems to react proportionally to the current error with respect to the reference signal. %Proportional blocks produce linear combinations of their inputs. \MWrev{Our proportional blocks take one input and produce one output. They can be composed such that there are several inputs and several outputs.} They are used for combining signals from the integral and differentiator within our PID. A single-input block takes an input signal A(t) and produces an output signal $B(t) = kA(t)$ where $k\in \mathbb{R}$ is the gain. A summation block takes input signals $\{A_i(t)\}^n_{i=1}$ and produces an output $B(t) = \sum^n_{i=1}A_i(t)$. The following CRN implements both of these functions for the dual rail representation of signals

%The following PI components of our PID are given in \cite{klavins} but we list the definitions for completeness. We expand on Klavins result for PI controller by showing completeness. 

%\LL{Need to write a clearer introduction paragraph}

\begin{definition}[Proportional block~\cite{oishi2011biomolecular}]\label{Def:PropBlaock}
%Gain and summation blocks produce linear combinations of their inputs. They are used for combining signals from the integral and differentiator within our PID. A gain signal takes an input signal A(t) and produces an output signal $B(t) = kA(t)$ where $k\in \mathbb{R}$ is a rate. A summation block takes an input $\{A_i(t)\}^n_{i=1}$ and produces an output $B(t) = \sum^n_{i=1}A_i(t)$. The following CRN implements both of these functions
For input species $E^{+},E^-$, output species $P^+,P^-$,  parameters $s, q \in\mathbb{R}_{> 0}$, and the multiplier $r \in \mathbb{R}_{\geq0}$, the \emph{proportional block} is a CRN composed by the following reactions 
%\TUW{The rates below are important, do not change them, otherwise the correctness proof will fail; adjust the pictures, instead. CRUCIALLY: do not replace $q$ with $s$ or $rs$! This applies to all blocks.} 
\begin{align*}
    &E^+  \overset{rs}{\rightarrow} E^+ + P^+  
    &&E^-  \overset{rs}{\rightarrow} E^- + P^- \\
    &P^+ + P^-  \overset{q}{\rightarrow}\emptyset
    &&P^+  \overset{s}{\rightarrow}\emptyset \\
    &P^-  \overset{s}{\rightarrow}\emptyset
\end{align*}  
\end{definition}
%\LL{Need to be clear about how we call the ODE associated to a species. HEre I used some name.}
%\LC{Recast proposition in terms of $\{A^{+}_i,A^-_i\}^n_{i=1}$}

%\IMT{why plural? (i.e., blocks instead of block)} \TUW{Because the output and control are $n$-dimensional VECTORS in general}

\begin{theorem}\label{prop:proportional}
On any bounded time interval, the ODE system of~(\ref{eq:fullsys}) converges to an ODE system satisfying $x_{P^+_i} = r x_{E^+_i}$ and $x_{P^-_i} = r x_{E^-_i}$ for all $1 \leq i \leq n$ if $s \to \infty$ in all proportional blocks.
\end{theorem}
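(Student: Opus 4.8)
The plan is to treat the limit $s\to\infty$ as a \emph{singular perturbation} and to conclude via Tikhonov's theorem. Writing the mass-action ODEs~(\ref{eq:ODEDetermnisticSemanticsCRN}) for the output species of the $i$-th proportional block gives
\begin{align*}
\partial_t x_{P^+_i} &= s\,(r\,x_{E^+_i}-x_{P^+_i}) - q\,x_{P^+_i}x_{P^-_i} + (\cdots), \\
\partial_t x_{P^-_i} &= s\,(r\,x_{E^-_i}-x_{P^-_i}) - q\,x_{P^+_i}x_{P^-_i} + (\cdots),
\end{align*}
where $(\cdots)$ collects any fluxes through which downstream (addition) blocks read $P^\pm_i$. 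For the standard catalytic coupling these are absent, but the argument only uses that they are bounded independently of $s$. Setting $\epsilon=1/s$ puts the full system~(\ref{eq:fullsys}) into the standard slow--fast form $\epsilon\,\partial_t z = g(z,y,\epsilon)$, $\partial_t y = f(z,y,\epsilon)$, with fast variables $z=(x_{P^+_i},x_{P^-_i})_i$ and slow variables $y$ comprising the plant species and all remaining block species; the error species $E^\pm_i$ act only as catalysts in the block and hence belong to $y$.

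Next I would compute the \emph{critical manifold} $g(z,y,0)=0$. Because the annihilation term and the downstream fluxes are $O(1)$, they are divided by $s$ and vanish at $\epsilon=0$, so the fast equations reduce to $r\,x_{E^+_i}-x_{P^+_i}=0$ and $r\,x_{E^-_i}-x_{P^-_i}=0$; these are exactly the asserted relations $x_{P^\pm_i}=r\,x_{E^\pm_i}$, and they determine $z$ as a smooth function of $y$. The associated boundary-layer system in the fast time $\sigma=t/\epsilon$ is, to leading order, the decoupled linear system $\partial_\sigma x_{P^\pm_i}=r\,x_{E^\pm_i}-x_{P^\pm_i}$ with $y$ frozen, whose Jacobian in $z$ is $-\mathrm{Id}$. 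Hence the critical manifold is exponentially (uniformly) attracting, the key hypothesis of Tikhonov's theorem.

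To apply the theorem on a fixed interval $[0,T]$ I would then establish a priori bounds on all concentrations that are uniform in $s$. Since the omitted terms are nonpositive, $\partial_t x_{P^+_i}\leq s\,(r\,x_{E^+_i}-x_{P^+_i})$, and a comparison argument gives $x_{P^+_i}(t)\leq\max\{x_{P^+_i}(0),\,r\sup_{[0,T]}x_{E^+_i}\}$ on $[0,T]$, uniformly in $s$; the same holds for $x_{P^-_i}$. As $E^\pm_i$ are driven only by the bounded reference and by the plant output, a bootstrap/Gronwall estimate confines the entire slow trajectory to a compact set on $[0,T]$ independently of $s$. On that set $f,g$ are smooth, so Tikhonov's theorem yields convergence, uniform on $[0,T]$, of the solution of~(\ref{eq:fullsys}) to the reduced system obtained by substituting $x_{P^\pm_i}=r\,x_{E^\pm_i}$; this is precisely the claimed limiting ODE system.

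The main obstacle is this last step: since $P^\pm_i$ feed forward into the controller and, through the feedback loop, back into the very inputs $E^\pm_i$ that define the slow manifold, one must rule out that the slow dynamics escape the domain as $s$ grows. The comparison bound is what makes this benign—each fast output is \emph{slaved} to and bounded by its slow input scaled by $r$, so no secondary fast feedback is generated—and it is also where one uses that all $O(1)$ contributions (annihilation, catalytic read-out, and competing block fluxes) are dominated by the $O(s)$ production/degradation balance and therefore disappear from the leading-order manifold.
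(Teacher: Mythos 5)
Your proposal is correct and follows essentially the same route as the paper's own proof: both treat $x_{P^+_i},x_{P^-_i}$ as fast variables and invoke Tikhonov's theorem, with the critical manifold $x_{P^\pm_i}=r\,x_{E^\pm_i}$ arising as the unique, asymptotically stable equilibrium of the boundary-layer system. Your additional care about uniform-in-$s$ a priori bounds and the feedback coupling only fills in details that the paper's sketch leaves implicit.
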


\begin{proof}[Proof (Sketch)]
Note that
\begin{align*}
    \partial_t x_{P^+_i} & = rs x_{E^+_i} - s x_{P^+_i} - q x_{Pr^+_i} x_{P^-_i} \\
    \partial_t x_{P^-_i} & = rs x_{E^-_i} - s x_{P^-_i} - q x_{P^+_i} x_{P^-_i}    
\end{align*}
This motivates to interpret all $x_{P^+_i}$ and $x_{P^-_i}$ as fast variables in the sense of Tikhonov's theorem~\cite[Section 8.2]{Verhulst2005}. To see that the requirements of the theorem are satisfied, we note that the fast ODE system (i.e., the one consisting of fast variables) admits, for any fixed vector of slow variables (i.e., all variables that are not fast), exactly one possible equilibrium point. Moreover, it is an asymptotically stable equilibrium of the fast ODE system. We finish the proof by noting that smooth exogenous reference signals can be captured because Tikhonov's theorem applies to non-autonomous smoooth ODE systems.
\end{proof}

\begin{remark}
Theorem~\ref{prop:proportional} extends the result of~\cite{oishi2011biomolecular} to nonlinear control systems. The same holds true for the other blocks of this section.
\end{remark}

\begin{remark}
The proof of Theorem~\ref{prop:proportional} reveals that reaction 
$P^+ + P^- \overset{q}{\rightarrow} \emptyset$ is not strictly needed to ensure correctness. However, this reaction precludes $P^+$ and $P^-$ from attaining excessively large values (recall that $s$ is large), thus reducing the impact of numerical errors.
\end{remark}

The correctness of proportional, subtraction and converter blocks introduced next and depicted in Figure~\ref{PIDfull} is shown similarly to  Theorem~\ref{prop:proportional}. 

More specifically, the addition block is given by.

\begin{definition}[Addition block~\cite{oishi2011biomolecular}]
For input species $P^{+},I^+$ and $P^{-},I^-$, output species $E^+,E^-$, and parameters $s, q \in\mathbb{R}_{> 0}$ the \emph{addition block} is a CRN composed by the following reactions 
\begin{align*}
   P^+ & \overset{s}{\rightarrow} P^+ + B^+ &
   I^+ & \overset{s}{\rightarrow} I^+ + B^+  \\
   P^- & \overset{s}{\rightarrow} P^- + B^-  &
   I^- & \overset{s}{\rightarrow} I^- + B^-  \\
   B^+ + B^- & \overset{q}{\rightarrow} \emptyset  &
   B^+ & \overset{s}{\rightarrow} \emptyset  &
   B^- & \overset{s}{\rightarrow} \emptyset  
\end{align*}
\end{definition}

The subtraction block, instead, is defined as follows.

\begin{definition}[Subtraction block~\cite{oishi2011biomolecular}]
For input species $Y'^{+},R^+$ and $Y'^{-},R'^-$, output species $E^+,E^-$, and parameters $s, q \in\mathbb{R}_{> 0}$ the \emph{subtraction block} is a CRN composed by the following reactions 
\begin{align*}
   Y'^+ & \overset{s}{\rightarrow} Y'^+ + E^+ &
   R^- & \overset{s}{\rightarrow} R^- + E^+  \\
   Y'^- & \overset{s}{\rightarrow} Y'^- + E^-  &
   R^+ & \overset{s}{\rightarrow} R^+ + E^-  \\
   E^+ + E^- & \overset{q}{\rightarrow} \emptyset  &
   E^+ & \overset{s}{\rightarrow} \emptyset  &
   E^- & \overset{s}{\rightarrow} \emptyset 
\end{align*}
\end{definition}

%Similarly, the dual rail converter block is defined as follows 
%The dual-rail converter block prepares $Y$, the non-negative output of the plant CRN, for being handled by the dual-rail encoding of the controller, thus transforming $Y$ into two distinct species  $Y'^+$ and $Y'^-$.

At last, the converter block is described by the following.

\begin{definition}[Dual rail converter block]
For input species $Y$, output species $Y'^+,Y'^-$, and parameters $s, q \in\mathbb{R}_{> 0}$ the \emph{single to dual rail converter  block} is a CRN composed by the following reactions 
\begin{align*}
   Y & \overset{s}{\rightarrow} Y + Y'^+ &
   Y'^+ & \overset{s}{\rightarrow} \emptyset &
   Y'^+ + Y'^- & \overset{q}{\rightarrow} \emptyset   
\end{align*}
\end{definition}

\subsection{Integral Block}

%\LC{Please make a subsection for Integral Block starting here and change the title of subsection A accordingly}\LL{Done}
The integral component computes a multiple of the integral of the input signal. This action is widely used in control systems due to its ability to collect past information about the error to be corrected. A CRN implementation of the integral component has been proposed in \cite{oishi2011biomolecular} and is reported in Definition \ref{Def:IntegrBlock}. 

\begin{definition}[Integral block~\cite{oishi2011biomolecular}]\label{Def:IntegrBlock}
For input species $E^{+},E^-$, output species $I^+,I^-$, some constant $q \in\mathbb{R}_{> 0}$ and multiplier $r \in \mathbb{R}_{\geq0}$, the integral block is given by the following CRN
\begin{align*}
   & E^+ \overset{r}{\rightarrow} E^+ + I^+ &
   & E^- \overset{r}{\rightarrow} E^- + I^- &
   & I^+ + I^-\overset{q}{\rightarrow}\emptyset
\end{align*}
\end{definition}

\begin{proposition}\label{prop:integral}
The integral blocks introduced in Definition~\ref{Def:IntegrBlock} are correct. More formally, for all $1 \leq i \leq n$, it holds that
\[
x_{I^+_i}(t) - x_{I^-_i}(t) = r \int_0^t x_{E^+_i}(\tau) d\tau - r \int_0^t x_{E^-_i}(\tau) d\tau
\]
\end{proposition}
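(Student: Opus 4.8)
The plan is to derive the mass-action ODEs governing the two output species of the $i$-th integral block directly from Definition~\ref{Def:IntegrBlock}, and then exploit the fact that the annihilation reaction contributes identically to both equations, so that it cancels under subtraction. Concretely, the reaction $E^+_i \overset{r}{\rightarrow} E^+_i + I^+_i$ produces $I^+_i$ catalytically at rate $r x_{E^+_i}$, the reaction $E^-_i \overset{r}{\rightarrow} E^-_i + I^-_i$ produces $I^-_i$ at rate $r x_{E^-_i}$, and the reaction $I^+_i + I^-_i \overset{q}{\rightarrow} \emptyset$ removes one molecule of each species at rate $q x_{I^+_i} x_{I^-_i}$. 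Collecting these contributions according to~(\ref{eq:ODEDetermnisticSemanticsCRN}) yields
\begin{align*}
    \partial_t x_{I^+_i} & = r x_{E^+_i} - q x_{I^+_i} x_{I^-_i}, \\
    \partial_t x_{I^-_i} & = r x_{E^-_i} - q x_{I^+_i} x_{I^-_i}.
\end{align*}

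The key observation, and the reason this block, unlike the proportional block, admits an \emph{exact} rather than an asymptotic correctness statement, is that the nonlinear annihilation term $q x_{I^+_i} x_{I^-_i}$ is common to both equations. Subtracting the second ODE from the first therefore eliminates it entirely, leaving the purely linear relation
\[
\partial_t \big( x_{I^+_i}(t) - x_{I^-_i}(t) \big) = r x_{E^+_i}(t) - r x_{E^-_i}(t).
\]

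It then remains only to integrate both sides from $0$ to $t$ and invoke the fundamental theorem of calculus, using the standard initialisation $x_{I^+_i}(0) = x_{I^-_i}(0) = 0$ so that the boundary term $x_{I^+_i}(0) - x_{I^-_i}(0)$ vanishes. This gives the claimed identity for every $1 \leq i \leq n$. I do not anticipate a genuine obstacle here: the whole argument is a one-line cancellation followed by an integration, and no singular-perturbation (Tikhonov) machinery of the kind used for Theorem~\ref{prop:proportional} is required, precisely because the result is an exact algebraic identity valid for all $q$ rather than a limiting statement. The only points warranting explicit care are, first, to state the assumption on the initial conditions of $I^+_i$ and $I^-_i$, since a nonzero imbalance at $t=0$ would introduce a constant offset in the identity; and second, to note that the argument is unaffected by embedding the block in the full interconnected system~(\ref{eq:fullsys}), because the signals $x_{E^\pm_i}$ enter the $I^\pm_i$ dynamics solely through the two production reactions above.
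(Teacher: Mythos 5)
Your proposal is correct and matches the paper's argument, which is stated in one line as ``straightforward via differentiation (the values at $t=0$ are chosen appropriately)''; you have simply spelled out the differentiation, the cancellation of the annihilation term $q\,x_{I^+_i}x_{I^-_i}$, and the zero initial conditions that the paper leaves implicit. No discrepancy in approach.
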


\begin{proof}
Straightforward via differentiation (the values at $t = 0$ are chosen appropriately).
\end{proof}

Proposition~\ref{prop:integral} can be seen as a (straightforward) extension of the corresponding result in~\cite{oishi2011biomolecular} to nonlinear CRN plants.

\subsection{Derivative Block}
%\IMT{Use consistent name: we used derivative operator/component/etc in Intro, now it is derivative block. This text must be changed to be consistent with the new developments.}
The derivative block computes a multiple of the derivative of the input signal. This component is used in control systems to predict the future error given its current trend, and thus to help dampen oscillations introduced by P and I components.

Building a derivative module by chemical reactions is challenging because on-the-fly differentiation can only be done by comparing a signal at two time points, inherently requiring an approximation dependent on the time difference. This is resolved by the circuit in Figure \ref{PIDfull}(d), which handles dual rail input and output. 
Intuitively, the inputs $E^+$ and $E^-$ are sampled at two time points, $E^+$, $A^+$ and $E^-$, $A^-$, respectively, and a multiple of their difference is provided via $D^+$, $D^-$.
In Figure \ref{PIDfull}(d), the two reactions 
$E^+ \overset{rv}{\rightarrow} E^+ + A^+$ and $A^+ \overset{v}{\rightarrow} \emptyset$
cause $x_{A^+}$ to track $r x_{E^+}$ with a (slight) delay dependent on $v$. Intuitively, this yields $r x_{E^+} - x_{A^-} \approx \partial_t r x_{E^+}$. The symmetric two reactions similarly cause $x_{A^-}$ to track $r x_{E^-}$, which ensures that $r x_{E^-} - x_{A^-} \approx \partial_t r x_{E^-}$. The three reactions 
$E^+ \overset{rvs}{\rightarrow} E^+ + D^+$, 
$A^- \overset{vs}{\rightarrow} A^- + D^+$, and
$D^+ \overset{s}{\rightarrow} \emptyset$
cause $x_{D^+}$ to track $r v x_{E^+} + v x_{A^-}$ with delay dependent on $s$.
The symmetric three reactions similarly cause $x_{D^-}$ to track $r v x_{E^-} + v x_{A^+}$.
Thus $x_{D^+} - x_{D^-}$ tracks $v(r x_{E^+} + x_{A^-}) - v(r x_{E^-} + x_{A^+}) = v(r x_{E^+} - x_{A^+}) - v(r x_{E^-} - x_{A^-})$ with delay dependent on $s$. This and the above discussion allow us then to conclude that $x_{D^+} - x_{D^-} \approx r \partial_t (x_{E^+} - x_{E^-})$.
%\LC{Very good, but I put r outside in the last step: isn't that what we want to conclude? Also note that I removed the final "with a delay proportional to s". Not much point in emphasising this here, the original emphasis was actually about it being proportional to v, not to s, but with r in the mix it was now getting too obscure.} \TUW{I added $s$. In general, the main goal is imho to provide intuition why the derivative block approximates $r \partial_t x_E$. However, please feel free to modify.}

The next theorem formalises the above considerations using Tikhonov's theorem.

%Thus $x_{D^+} - x_{D^-}$ tracks $v(r x_{E^+} + x_{A^-}) - v(r x_{E^-} + x_{A^+}) = v(r x_{E^+} - r x_{E^-}) - v(x_{A^+} - x_{A^-})$, which represents the difference between the dual-rail input and its delayed copy proportionally to the delay $v$.

%\IMT{which one?}. \LC{I think it is v*s: they both should be big enough. But leave it like this: the theorem spells what each parameter does in the limit.} 

\begin{definition}\label{Def:DerivBlock}
For input species $E^+, E^-$, auxiliary species $A^+,A^-$, output species $D^+,D^-$, parameters $q,s,v \in\mathbb{R}_{> 0}$ and the multiplier $r \in \mathbb{R}_{\geq0}$, the derivative block is a CRN composed by the following reactions 
\begin{align*}
   E^+ & \overset{rv}{\rightarrow} E^+ + A^+ &
   E^- & \overset{rv}{\rightarrow} E^- + A^-  \\
   E^+ & \overset{rvs}{\rightarrow} E^+ + D^+  &
   E^- & \overset{rvs}{\rightarrow} E^- + D^-  \\
   A^+ & \overset{v}{\rightarrow} \emptyset  &
   A^- & \overset{v}{\rightarrow} \emptyset  \\
   A^+ & \overset{vs}{\rightarrow} A^+ + D^-  &
   A^- & \overset{vs}{\rightarrow} A^- + D^+  \\
   D^+ & \overset{s}{\rightarrow} \emptyset &
   D^- & \overset{s}{\rightarrow} \emptyset &
   D^+ + D^- & \overset{q}{\rightarrow} \emptyset
\end{align*}
\end{definition}

The following theorem shows that the above CRN is such that, under certain scaling of the rates, $(x_{D^+}-x_{D^-}) $ produces a correct approximation of the derivative of $r(x_{E^+} - x_{E^-})$.

\begin{theorem}\label{Derivative}
The derivative blocks are asymptotically correct. In particular, the following holds.
\begin{enumerate}
    \item The solution of~(\ref{eq:fullsys}) converges, on any bounded time interval, to an ODE system which satisfies $x_{D^+_i} - x_{D^-_i} = \partial_t x_{A^+_i} - \partial_t x_{A^-_i}$, for all $1 \leq i \leq n$, if $s \to \infty$ in all derivative blocks.
    \item The solution of~(\ref{eq:fullsys}) converges, on any bounded time interval, to an ODE system which satisfies $x_{A^+_i} = r x_{E^+_i}$ and $x_{A^-_i} = r x_{E^-_i}$ for all $1 \leq i \leq n$, if $v \to \infty$ in all derivative blocks.
\end{enumerate}
\end{theorem}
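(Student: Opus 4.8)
The plan is to establish each of the two claims by an independent application of Tikhonov's singular-perturbation theorem, following the template of the proof of Theorem~\ref{prop:proportional}, and then to extract the stated identities. First I would record the mass-action ODEs of the block (suppressing the coordinate index $i$). Since the auxiliary species $A^\pm$ and the outputs $D^\pm$ enter no other block as a non-catalytic reactant, their equations involve only derivative-block reactions:
\begin{align*}
\partial_t x_{A^+} &= v(r x_{E^+} - x_{A^+}), & \partial_t x_{A^-} &= v(r x_{E^-} - x_{A^-}), \\
\partial_t x_{D^+} &= s(rv x_{E^+} + v x_{A^-} - x_{D^+}) - q x_{D^+} x_{D^-}, \\
\partial_t x_{D^-} &= s(rv x_{E^-} + v x_{A^+} - x_{D^-}) - q x_{D^+} x_{D^-}.
\end{align*}
The species $E^\pm$ and all remaining plant and block variables are driven by reactions whose rates do not contain the derivative-block parameters $s$ and $v$, so they play the role of slow (or exogenous smooth) inputs in what follows.

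For claim~(1) I would set $\epsilon = 1/s$ and declare $D^+, D^-$ to be the fast variables. At $\epsilon = 0$ the boundary-layer equations are linear in $D^\pm$ with the unique root $x_{D^+}^* = rv x_{E^+} + v x_{A^-}$ and $x_{D^-}^* = rv x_{E^-} + v x_{A^+}$, and the Jacobian of the fast subsystem in the fast variables is the diagonal matrix $-I$, which is Hurwitz; hence this quasi-steady state is the unique, asymptotically stable equilibrium demanded by Tikhonov. The non-autonomous version of the theorem then gives convergence, on every bounded interval, of the solution to the reduced system living on this slow manifold. The decisive step is the ensuing algebraic identity: subtracting the two quasi-steady-state expressions yields $x_{D^+}^* - x_{D^-}^* = rv(x_{E^+} - x_{E^-}) + v(x_{A^-} - x_{A^+})$, and substituting the $A^\pm$-ODEs above shows the right-hand side equals $\partial_t x_{A^+} - \partial_t x_{A^-}$, which is exactly~(1).

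For claim~(2) I would instead set $\delta = 1/v$ and declare $A^+, A^-$ to be the fast variables; their boundary-layer equations are linear with unique root $x_{A^\pm}^* = r x_{E^\pm}$ and fast Jacobian $-I$, so Tikhonov again applies and returns $x_{A^\pm} = r x_{E^\pm}$ on bounded intervals. The main obstacle is that $v$ also multiplies the production terms of $D^\pm$, so $D^\pm$ is not slow under this limit and the naive fast subsystem $\{A^\pm, D^\pm\}$ has no bounded equilibrium (the $D$-production is unbalanced at leading order). I would circumvent this by noting that the fast $A$-dynamics depend only on the slow inputs $E^\pm$ and not on $D^\pm$, so that the pair $(E^\pm, A^\pm)$ already forms a closed slow-fast system to which Tikhonov applies in isolation; the fast-growing $D^\pm$ are simply irrelevant to the $A$-conclusion. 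Beyond this point the only real care needed is the bookkeeping of the slow/fast partition inside the interconnected CRN~(\ref{eq:fullsys}) and the observation that scaling a block-local parameter leaves every other variable slow; the stability hypotheses themselves are immediate, since each fast Jacobian equals $-I$.
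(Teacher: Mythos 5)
Your overall route is the same as the paper's: both claims are obtained from Tikhonov's theorem with the same fast/slow splits ($D^\pm$ fast for the limit $s \to \infty$, $A^\pm$ fast for the limit $v \to \infty$), and your ODEs for the block agree with those in the paper's proof. For claim~(1) your argument is a correct minor variant: you take the quasi-steady states of $x_{D^+_i}$ and $x_{D^-_i}$ directly and discard the annihilation term $q x_{D^+_i} x_{D^-_i}$ as $O(1/s)$ on the fast time scale, whereas the paper introduces an auxiliary variable $z_i$ for the difference $x_{D^+_i} - x_{D^-_i}$, whose ODE is exactly linear because the $q$-terms cancel, and adjoins $z_i$ to the fast subsystem. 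Your closing algebraic identity $x_{D^+_i}^* - x_{D^-_i}^* = v(r x_{E^+_i} - x_{A^+_i}) - v(r x_{E^-_i} - x_{A^-_i}) = \partial_t x_{A^+_i} - \partial_t x_{A^-_i}$ is exactly the one the paper relies on.

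For claim~(2) there is a genuine gap in your repair. You correctly observe that $D^\pm_i$ cannot be treated as slow when $v \to \infty$, since their production rates $svr x_{E^\pm_i} + vs x_{A^\mp_i}$ scale with $v$ and are not balanced at leading order (the paper's sketch simply declares ``all other variables slow'' and does not confront this). But your fix --- that the pair $(E^\pm_i, A^\pm_i)$ ``already forms a closed slow-fast system'' so that the divergent $D^\pm_i$ are irrelevant --- is false in the interconnected network~(\ref{eq:fullsys}): $D^\pm_i$ feeds, via the addition block, into the control signal $U^\pm_i$, which actuates the plant, whose output passes through the converter and subtraction blocks back into $E^\pm_i$. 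Hence $E^\pm_i$ is not an exogenous input to the $A$-dynamics but depends on the very variables whose behaviour is unbounded in the limit, and the reduced slow system you would pass to is not well defined without a further argument (e.g., that only the bounded difference $x_{D^+_i} - x_{D^-_i}$ enters downstream dynamics, which is not literally true of the addition block's reactions, or a reformulation in which the two limits are taken jointly). You should either close this loop explicitly or restrict the claim to the open-loop block, as the informal discussion preceding Definition~\ref{Def:DerivBlock} effectively does.
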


\begin{proof}[Proof (Sketch)]
To see $1)$, we first note that Definition~\ref{Def:DerivBlock} yields
\begin{align*}
    \partial_t x_{D^+_i} & = svr x_{E^+_i} + vs x_{A^-_i} - s x_{D^+_i} - q x_{D^+_i} x_{D^-_i} \\
    \partial_t x_{D^-_i} & = svr x_{E^-_i} + vs x_{A^+_i} - s x_{D^-_i} - q x_{D^+_i} x_{D^-_i} \\
    \partial_t x_{A^+_i} & = vr x_{E^+_i} - v x_{A^+_i} \\
    \partial_t x_{A^-_i} & = vr x_{E^-_i} - v x_{A^-_i}
    % ODEs without multiplier
    % \partial_t x_{D^+} & = vs x_{E^+} + vs x_{A^-} - s x_{D^+} - q x_{D^+} x_{D^-} \\
    % \partial_t x_{D^-} & = vs x_{E^-} + vs x_{A^+} - s x_{D^-} - q x_{D^+} x_{D^-} \\
    % \partial_t x_{A^+} & = v x_{E^+} - v x_{A^+} \\
    % \partial_t x_{A^-} & = v x_{E^-} - v x_{A^-}
\end{align*}
Since this implies
\begin{multline*}
    \partial_t (x_{D^+_i} - x_{D^-_i}) =  s(vr x_{E^+_i} - v x_{A^+_i}) \\ - s(v r x_{E^-_i} - v x_{A^-_i}) - s (x_{D^+_i} - x_{D^-_i}) ,
\end{multline*}
this motivates us to add to the ODE system of~(\ref{eq:fullsys}) the additional ODE 
\begin{align*}
    \partial_t z_i & = s\underbrace{(v r x_{E^+_i} - v x_{A^+_i})}_{= \partial_t x_{A^+_i}} - s \underbrace{(v r x_{E^-_i} - v x_{A^-_i})}_{= \partial_t x_{A^-_i}} - s z_i
\end{align*}
and to replace each instance of $(x_{D^+_i} - x_{D^-_i})$ in the ODE system with $z_i$. By processing the other derivative blocks in a similar fashion, we introduce new ODE variables $z = (z_1,\ldots, z_n)$. To see that the requirements of Tikhonov's theorem are satisfied, we note that the fast ODE system (i.e., the one containing $z$, $D^+_1,\ldots,D^+_n$ and $D^-_1,\ldots,D^-_n$) admits, for any fixed vector of slow variables, exactly one equilibrium point. Additionally,
it is an asymptotically stable equilibrium of the fast ODE system. To see $2)$, instead, we apply Tikhonov's theorem in the case where, in every derivative block, $x_{A^+_1}, \ldots, x_{A^+_n}$ and $x_{A^-_1}, \ldots, x_{A^-_n}$ are treated as fast variables (while all other variables are considered to be slow) and $v \to \infty$ in all derivative blocks.
\end{proof}

\begin{figure*}[!ht]
\centering
\begin{minipage}[l]{2.0\columnwidth}
\centering
\includegraphics[width=16cm]{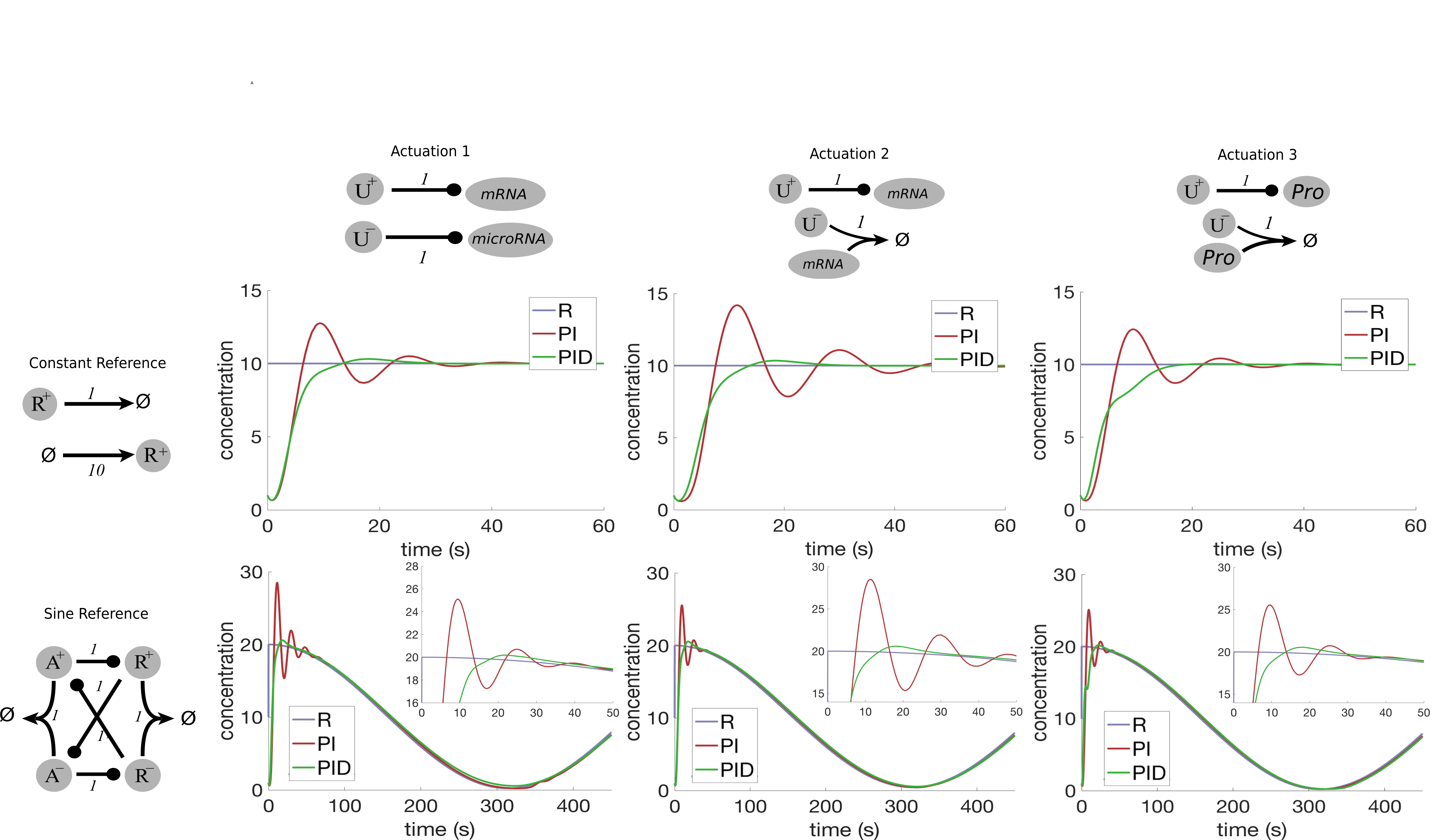}
\caption{%\LL{To be super precise in red and green, in the plots above, we do not plot PI and PID, but the output species for that particular controller. However, I guess this is still clear if we explain it in the caption.} 
We consider the gene expression model given in Section \ref{plantcontrol} and compare the time evolution of the species $\mathit{Pro}$ with different reference signals for PID and PI feedback control with the actuation models described in Equations~\eqref{Eq:Actuation1},~\eqref{Eq:Actuation2}, and~\eqref{Eq:Actuation3} shown in the left, middle, and right column of the plots, respectively. We consider both a constant  and a sine wave reference signal (shown in top and bottom rows of the plots, respectively). 
It is possible to observe that while $\mathit{Pro}$ already tracks correctly the reference signals for PI control, in the case of a PID controller, the output has reduced oscillations around the reference signal. This is emphasised in the insets seen on the sine reference row where we examine the first 50 seconds of the time evolution.}
\label{temp}
\end{minipage}
\end{figure*}

\section{PID control of gene expression}\label{plantcontrol}

In this section we apply the PID feedback control architecture developed in this paper to a gene expression model. In particular, we consider a $\mathit{microRNA}$ regulated gene expression model from~\cite{laurenti2018molecular}, for which synthetic implementations have already been proposed in~\cite{delalez2018design}. The model is composed by the following reactions, where for simplicity we fixed unitary kinetic parameters
\begin{align*}
\emptyset & \overset{1}{\rightarrow} \mathit{mRNA}  & \mathit{mRNA} & \overset{1}{\rightarrow} \emptyset \\
\mathit{mRNA} & \overset{1}{\rightarrow} \mathit{mRNA} + \mathit{Pro} & 
\mathit{Pro} & \overset{1}\rightarrow \emptyset \\
\mathit{mRNA} + \mathit{microRNA} & \overset{1}{\rightarrow} \emptyset &\mathit{microRNA} & \overset{1}\rightarrow  \emptyset \\
\emptyset & \overset{1}{\rightarrow} \mathit{microRNA}. &
\end{align*}
%where $k_1,...,k_7$ are arbitrary reaction rates. \IMT{I think that we should strive for reproducibility of the experiments. I would put the actual kinetic parameters that were used (also for the controllers.}
That is, we have that $\mathit{mRNA}$ catalyses the production of the protein $\mathit{Pro}$ and is down-regulated by an annihilation reaction with the $\mathit{microRNA}$.% $\mathit{mRNA}$ and $\mathit{microRNA}$ are expressed and degraded with same rate. 
%A stochastic version of this model without $\mathit{microRNA}$ has been already studied in \cite{briat2018variance} under integral feedback control. \IMT{So why don't we also study that model under deterministic semantics with PID?} 
%\LL{The problem is that the output of the controller has two components $U^+$ and $U^-.$ $U^+$ represents the positive component of the signal and $U^-$ the negative. As a consequence, we need $U^-$ to act negatively on the output. We can do that either with the microRNA as we do now, or by considering a reaction like $U^- + mRNA \to^k \emptyset ,$ which would give us a model closer to the one studied in \cite{briat2018variance} (we will remove the $microRNA$ from the model in that case). We may then discuss that in a "real world" implementation $U^-$ may be implemented with a microRNA.
%}

The objective of our control is to have the protein $Pro$ to follow a reference signal. Given $U^+$ and $U^-$, the control signals synthesised by the controller, we assume that these can act on the plant by regulating the expression rate of $\mathit{mRNA}$ and $\mathit{microRNA}$, respectively.
This assumption is justified by the fact that these mechanisms can be implemented synthetically  \cite{delalez2018design}.
We consider the following reactions to model such actuation:
\begin{align}
\label{Eq:Actuation1}
    U^+ & \to^{1} U^+ + \mathit{mRNA} & U^- & \to^{1} U^- + \mathit{microRNA}.
\end{align}
In this model, a high concentration of  $U^+$ will increase the production rate of $\mathit{mRNA}$ and so of $\mathit{Pro}$, whereas a high concentration of $U^-$ will decrease the amount of $\mathit{mRNA}$ by producing $\mathit{microRNA}$ with a higher rate.

In the actuation model considered above we have that the control signals act on two different species. 
This is not a requirement of our architecture. Another possible actuation is that $U^-$ annihilates  $\mathit{mRNA}$ directly. This can be modelled with the following reactions
\begin{align}
\label{Eq:Actuation2}
\mathit{U^+} \to^1 \mathit{U^+} + \mathit{mRNA} \quad    \mathit{U^-} + \mathit{mRNA} \to^{1} \emptyset.
\end{align}
Finally, another possibility is that $U^+$ and $U^-$ acts directly on the target species $\mathit{Pro}$. In this case, the actuation is 
\begin{align}
\label{Eq:Actuation3}
\mathit{U^+} \to^1 \mathit{U^+} + \mathit{Pro} \quad    \mathit{U^-} + \mathit{Pro} \to^{1} \emptyset.
\end{align}

%\IMT{What is important here is to mention the parameterisation of the controllers that have been used. Is it the case that the P and I blocks have the same parameters in both the PI and the PID experiments? Are the parameterisations different depending on the reference signals that were used? We need to say something about that here in the main paper even if we give the full model in the appendix (but we should also say that we do so if this is going to be plan.}
In Figure \ref{temp} we consider the different actuation mechanisms described above and compare the performance of PI and PID controllers for two different reference signals: a constant signal and an oscillatory signal.
For all the plots in the figure we considered the same parameters for PI and PID controllers (reported in the Appendix).
It is easy to observe that, whereas a negative feedback with PI controller can already track both signals correctly, in the case of a PID controller the time evolution of the concentration of $\mathit{Pro}$ has reduced oscillations around the reference signals. 
This is due to the action of the derivative block.
In fact, while it is well known that the derivative component in a PID does not necessarily reach zero error at steady state, it can help to reduce the transient error between the output and the reference signals and to dampen oscillations around the set points.

\section{Conclusion}
In this work we considered feedback control with PID controllers and proposed a CRN implementation for this control architecture. This relies on a novel CRN, which computes the derivative of an input molecular signal.  
We applied our framework to control the protein expression in a $\mathit{microRNA}$ regulated gene expression model and showed improved performance compared to a PI feedback control. 
An interesting aspect, which has not been considered in this paper, is to study the effect that the proposed control system has on noise \cite{briat2018antithetic}. This is left as future work.

%\LL{MaybeIn Appendix put also the stochastic simulations }
%The main contribution of this paper is a novel CRN design of a derivative action, which we validate by comparing a PI controller, composed of the proportional and integral actions, against a PID controller, with the added derivative action. In Figure \ref{Figure:option1} we provide simulations of both the PI and PID controller against constant and sine waves generated by the CRNs listed above the diagram. In both cases rate parameters are fixed in the PI blocks of the controller to highlight the result of adding the derivative action. In all cases the output of the plant is given by the species $P$ and the reference signal is given by the species $R$.

%In the case of the constant signal, where there is no derivative action, which contributes to the stability, we  observe oscillation around the reference signal and a longer convergence time of 60 seconds as opposed to the PID controller which converges in 30 seconds. In the case of the sine wave reference signal we observe that the plant output oscillates around the inflection point. In contrast there is no such oscillation with our derivative action introduced in the full PID controller which tracks the reference signal. The convergence time is twice as fast at 100 seconds in the PID controller as opposed to 200 seconds in the PI controller.

\bibliographystyle{unsrt}
\bibliography{biblio}

\newpage

\appendix

\subsection{CRN for PID control of gene expression}
We present our full Chemical Reaction Network PID feedback loop with gene expression plant and two reference signals introduced in Section \ref{plantcontrol}. We include the three actuation mechanisms given with the plant. For each block we also give the initial conditions used to produce the simulations seen in Figure 3. %The following reaction network has been simulated using \textit{Microsoft's Visual GEC} \cite{}. 

\subsection*{PID Controller}

First we report the reactions and parameters of the CRN PID controller of which the proof of correctness and details of operation are outlined in Section 3. For all figures we used the same parameters \\

\textbf{Proportional Block:}
\begin{align*}
    &E^+  \overset{rs}{\rightarrow} E^+ + P^+  
    &&E^-  \overset{rs}{\rightarrow} E^- + P^- \\
    &P^+ + P^-  \overset{q}{\rightarrow}\emptyset
    &&P^+  \overset{s}{\rightarrow}\emptyset \\
    &P^-  \overset{s}{\rightarrow}\emptyset
\end{align*} 

Initial Conditions: (rates) $s,rs,q = 1$ (species) $P^+$,$P^- = 0$\\

\textbf{Integral Block:}
\begin{align*}
   & E^+ \overset{k}{\rightarrow} E^+ + I^+ &
   & E^- \overset{k}{\rightarrow} E^- + I^- &
   & I^+ + I^-\overset{q}{\rightarrow}\emptyset
\end{align*}

Initial Conditions: (rates) $k,q =1$, (species) $ I^+$, $I^- = 0$\\

\textbf{Derivative Block:}
\begin{align*}
   E^+ & \overset{rv}{\rightarrow} E^+ + A^+ &
   E^- & \overset{rv}{\rightarrow} E^- + A^-  \\
   E^+ & \overset{rvs}{\rightarrow} E^+ + D^+  &
   E^- & \overset{rvs}{\rightarrow} E^- + D^-  \\
   A^+ & \overset{v}{\rightarrow} \emptyset  &
   A^- & \overset{v}{\rightarrow} \emptyset  \\
   A^+ & \overset{vs}{\rightarrow} A^+ + D^-  &
   A^- & \overset{vs}{\rightarrow} A^- + D^+  \\
   D^+ & \overset{s}{\rightarrow} \emptyset &
   D^- & \overset{s}{\rightarrow} \emptyset &
   D^+ + D^- & \overset{q}{\rightarrow} \emptyset
\end{align*}

Initial Conditions: (rates) $v,s = 1$, $vs,q,rvs = 10$ , (species) $A^+$,$A^-$,$D^+$,$D^- = 0$\\

\subsection{Summation and Subtraction Blocks}

We provide the CRNs for the two summation blocks which are highlighted in the green block in Figure 2 and noted upon in section 3. The first adds the output of the P and I blocks together. The second the PI and D blocks together which produces the output species of the controller $U$ (given as $PID$ in the CRNs below). \\ 

\textbf{Addition Block $P + I$:}

\begin{align*}
   P^+ & \overset{s}{\rightarrow} P^+ + B^+ &
   I^+ & \overset{s}{\rightarrow} I^+ + B^+  \\
   P^- & \overset{s}{\rightarrow} P^- + B^-  &
   I^- & \overset{s}{\rightarrow} I^- + B^-  \\
   B^+ + B^- & \overset{q}{\rightarrow} \emptyset  &
   B^+ & \overset{s}{\rightarrow} \emptyset  \\
   B^- & \overset{s}{\rightarrow} \emptyset  
\end{align*}

Initial Conditions for $P + I$ summation block:
(rates) $s=0.8$,$q=0.3$, (species)  $B^+$,$B^- = 0$\\

\textbf{Addition Block $PI + D$:}

\begin{align*}
   PI^+ & \overset{s}{\rightarrow} PI^+ + PID^+ &
   D^+ & \overset{s}{\rightarrow} D^+ + PID^+  \\
   PI^- & \overset{s}{\rightarrow} PI^- + PID^-  &
   D^- & \overset{s}{\rightarrow} D^- + PID^-  \\
   PID^+ + PID^- & \overset{q}{\rightarrow} \emptyset  &
   PID^+ & \overset{s}{\rightarrow} \emptyset  \\
   PID^- & \overset{s}{\rightarrow} \emptyset  \\
\end{align*}

Initial Paramterisation for $PI + D$ summation block: (rates) $s=1.1$,$q=0.1$ (species) $PID^+$,$PID^- = 0$\\

\textbf{Subtraction Block:}

The subtraction block is used to compute the error of the output of the plant $Y$ with the reference signal. It is detailed further in Section 3:

\begin{align*}
   Y'^+ & \overset{s}{\rightarrow} Y'^+ + E^+ &
   R^- & \overset{s}{\rightarrow} R^- + E^+  \\
   Y'^- & \overset{s}{\rightarrow} Y'^- + E^-  &
   R^+ & \overset{s}{\rightarrow} R^+ + E^-  \\
   E^+ + E^- & \overset{q}{\rightarrow} \emptyset  &
   E^+ & \overset{s}{\rightarrow} \emptyset  \\
   E^- & \overset{s}{\rightarrow} \emptyset 
\end{align*}

Initial Conditions for difference block summation block:
(rates) $s,q = 1$\\

\subsection{Reference Signals}

Next we introduce the intitial conditions and reactions for both the constant and sine wave reference signals outlined further in Section IV. These act as a reference which we are trying to control our plant to track. \\

\textbf{Constant:}

The constant signal can be given simply by stating a non-decaying species with a molecular count equal to the constant signal however it can also be given by the following CRN which is stated in Figure 3. 

\begin{align*}
   \emptyset & \overset{k}{\rightarrow} R^+
   &R^+ \overset{r}{\rightarrow} \emptyset  \\
\end{align*}

Initial Conditions: (rates) $k = 10, r = 1$ (species) $R^+ = 0$, $R^- = 0$ \\

\textbf{Sine wave:} 
The sine wave has a slow reaction rate to allow for the PID controller to properly track the signal.
\begin{align*}
   A^+ & \overset{k}{\rightarrow} A^+ + R^+ &
   A^- & \overset{k}{\rightarrow} A^- + R^-  \\
   R^+ & \overset{k}{\rightarrow} R^+ + A^-  &
   R^- & \overset{k}{\rightarrow} R^- + A^+  \\
   A^+ + A^- & \overset{k}{\rightarrow} \emptyset &
   R^+ + R^- & \overset{k}{\rightarrow} \emptyset 
\end{align*}

Initial Conditions: (rates) $k = 0.01$ (species) $A^+ = 10$, $A^- = 0$, $R^+ = 0$, $R^- = 0$\\

\subsection*{Plant and Actuators}

We introduce the gene expression plant used within our model. We also include the three actuations methods from the controller $U^+, U^-$ seen in Figure 3 and discussed in section 4 used to interface with the plant. \\

\textbf{Actuator 1:}
\begin{align*}
   U^+ &  \overset{k}{\rightarrow} U^+ + mRNA &
   U^- & \overset{k}{\rightarrow} U^- + microRNA
\end{align*}

Initial Conditions: 
(rates) $k=1$, $mRNA=0$, $microRNA=0$ 
(species) $U^+ = 0.5$, $U^- = 0.5$ \\

\textbf{Actuator 2:}
\begin{align*}
   U^+ &  \overset{k}{\rightarrow} U^+ + mRNA &
   U^- + mRNA & \overset{k}{\rightarrow} U^-
\end{align*}

Initial Conditions: (rates) $k=1$, $mRNA=0$ (species) $ U^+ = 0.5$, $U^- 0.5$ \\

\textbf{Actuator 3:}
\begin{align*}
   U^+ &  \overset{k}{\rightarrow} U^+ + Pro &
   U^- + Pro & \overset{k}{\rightarrow} U^-
\end{align*}

Initial Conditions: (rates) $k=1$, $Pro=1$ (species) $U^+ = 0.5$, $U^- 0.5$ \\

\textbf{Plant:
}
\begin{align*}
   & \overset{k}{\rightarrow} mRNA &
   mRNA & \overset{k}{\rightarrow} \emptyset  \\
   mRNA^+ & \overset{k}{\rightarrow} mRNA + Pro &
   Pro & \overset{k}{\rightarrow} \emptyset \\
   && mRNA + microRNA &\overset{k}{\rightarrow} \emptyset  &\\
\end{align*}

Initial Conditions:  (rates) $k=1$, 
(species) $mRNA$, $microRNA = 0$, $Pro = 1$ \\

\subsection{Single to Dual Rail Block}

We introduce the block which takes the output of the plant $Y$ and transforms into into a dual rail signal, see blue block Figure 2 and discussed in section 3.

\begin{align*}
   Y & \overset{s}{\rightarrow} Y + Y'^+ &
   Y'^+ & \overset{s}{\rightarrow} \emptyset &
   Y'^+ + Y'^- & \overset{q}{\rightarrow} \emptyset   
\end{align*}

Initial Conditions: (rates) $s,q = 1$, (species) $Y^{'+}$,$Y^{'-} = 0$

\end{document}